\documentclass[12pt, draftclsnofoot, onecolumn]{IEEEtran}
\usepackage{epsfig,graphicx,subfigure,psfrag,amsmath,cases,bm}
\usepackage{latexsym,amssymb,algorithm,mathtools}
\usepackage{algorithmic}
\usepackage{color}
\usepackage{url}
\usepackage{scrtime}
\usepackage{stfloats}
\usepackage{tablefootnote}
\usepackage{cite}
\usepackage{amsfonts,mathabx}
\usepackage{textcomp}
\usepackage{amsthm}
\usepackage{xcolor,capt-of}
\usepackage{multirow}
\makeatletter
\newcommand*{\rom}[1]{\expandafter\@slowromancap\romannumeral #1@}
\makeatother

\DeclareMathOperator{\mino}{minimize}
\newtheorem{remark}{Remark}
\newtheorem{lemma}{Lemma}

\allowdisplaybreaks

\title{Movable Antenna-Enhanced Multiuser Communication: Optimal Discrete Antenna Positioning and Beamforming}
\author{\IEEEauthorblockN {Yifei Wu\IEEEauthorrefmark{1}, Dongfang Xu\IEEEauthorrefmark{2}, Derrick Wing Kwan Ng\IEEEauthorrefmark{3}, Wolfgang Gerstacker\IEEEauthorrefmark{1}, and Robert Schober\IEEEauthorrefmark{1}}


\IEEEauthorrefmark {1}Friedrich-Alexander-Universit\"at
Erlangen-N\"urnberg, Germany\\
\IEEEauthorrefmark {2}The Hong Kong University of Science and Technology, Hong Kong\\
\IEEEauthorrefmark {3}The University
of New South Wales, Australia%
}
\begin{document}
\maketitle
\begin{abstract}
    Movable antennas (MAs) are a promising paradigm to enhance the spatial degrees of freedom of conventional multi-antenna systems by flexibly adapting the positions of the antenna elements within a given transmit area. In this paper, we model the motion of the MA elements as discrete movements and study the corresponding resource allocation problem for MA-enabled multiuser multiple-input single-output (MISO) communication systems. Specifically, we jointly optimize the beamforming and the MA positions at the base station (BS) for the minimization of the total transmit power while guaranteeing the minimum required signal-to-interference-plus-noise ratio (SINR) of each individual user. To obtain the globally optimal solution to the formulated resource allocation problem, we develop an iterative algorithm capitalizing on the generalized Bender's decomposition with guaranteed convergence. Our numerical results demonstrate that the proposed MA-enabled communication system can significantly reduce the BS transmit power and the number of antenna elements needed to achieve a desired performance compared to state-of-the-art techniques, such as antenna selection. Furthermore, we observe that refining the step size of the MA motion driver improves performance at the expense of a higher computational complexity.
\end{abstract}
\section{Introduction}
Multiple-input multiple-output (MIMO) transmission is widely envisioned as a key technique in fulfilling the tremendous data traffic demands in sixth-generation (6G) wireless networks. By utilizing multiple antennas, MIMO can effectively leverage the spatial resources of wireless channels to deliver significant performance improvements, including improved data transmission rates\cite{mietzner2009multiple}, enhanced physical layer security\cite{tsai2014power}, and realize new paradigms such as integrated sensing and communication\cite{xu2022robust}, etc. However, conventional MIMO systems require multiple parallel radio frequency (RF) chains, leading to high hardware cost and computational complexity \cite{mietzner2009multiple}. 
To mitigate the cost and complexity introduced by a large number of RF chains, antenna selection (AS) has been advocated as a practical approach for the realization of a MIMO system. Indeed, the goal of AS is to capture the possible diversity and multiplexing gains of a given MIMO system by selecting a small subset of antennas with favorable channel characteristics from a large set of candidate antennas, thereby reducing the required number of RF chains \cite{sanayei2004antenna}. However, 
in conventional MIMO systems with or without AS, the antennas are deployed at fixed positions. As such, the inherent variations of the channel across the spatial continuous transmitter area cannot be fully exploited, which limits system performance.

To fully exploit the spatial variation of wireless channels within a given spatial transmitter area, the emerging holographic MIMO technique has been proposed in the literature \cite{huang2020holographic}. In particular, holographic MIMO surfaces consist of numerous miniature passive elements, spaced at sub-wavelength distances, which can be electronically controlled to manipulate the electromagnetic properties of the transmitted or reflected waves \cite{xu2020resource}. In practice, by utilizing zero-spacing continuous antenna elements, the available spatial degrees of freedom (DoFs) of the spatially continuous transmitter area can be fully leveraged by holographic MIMO. However, the large number of antenna elements required for holographic MIMO presents a critical challenge for both channel estimation and data processing, which hinders its practical implementation \cite{huang2020holographic}. 

Inspired by the spatial DoFs facilitated by holographic MIMO surfaces, a new MIMO concept based on movable antennas (MAs) has been proposed as a bridge technology between holographic MIMO and conventional MIMO \cite{zhu2022modeling}. In MA-enabled systems, each antenna element is connected to a radio frequency (RF) chain via a flexible cable and its physical position can be adjusted within a designated spatial region exploiting some electromechanical device, such as a stepper motor \cite{ma2022mimo}. This mobility allows the repositioning of the MA element at an optimal location for establishing favorable spatial antenna correlations so as to maximize the capacity of the MIMO system. In contrast to conventional MIMO systems, which comprise a set of antenna elements mounted at fixed locations, MA systems can utilize the full spatial DoFs within the available spatial transmitter area by leveraging the flexible movement of the MAs. Moreover, since MA systems require only a small number of antenna elements to exploit the available DoFs, the computational complexity entailed by the required signal processing is significantly reduced compared to holographic MIMO systems \cite{zhu2022modeling,ma2022mimo,zhu2023movable}.

To fully unleash the potential of the MA-enabled systems, a few initial works explored the joint design of beamforming and antenna positioning. For instance, in \cite{ma2022mimo}, a suboptimal algorithm based on alternating optimization (AO) was proposed for MA-enabled MIMO systems, where both the base station (BS) and multiple users were equipped with MAs. Also, the authors in \cite{zhu2023movable} considered a multiuser MA-enabled uplink communication system comprising multiple single-MA users and a BS equipped with a fixed antenna array. The BS employed zero-forcing (ZF) or minimum mean square error (MMSE) combining and the MA positions were adjusted using a gradient descent (GD) method. However, both \cite{ma2022mimo} and \cite{zhu2023movable} assume optimistically that the positions of MA elements can be adjusted freely within a given region, which may not be practical. In the prototype designs of MA-enabled systems shown in \cite{zhuravlev2015experimental} and \cite{basbug2017design}, the motion control of the employed electromechanical devices is discrete with finite precision. Thus, the transmitter area is quantized in a pace\cite{basbug2017design}, leading to a finite spatial resolution instead of the infinite resolution assumed in \cite{zhu2022modeling,ma2022mimo,zhu2023movable}. Moreover, the AO-based algorithm in \cite{ma2022mimo} and the GD-based method in \cite{zhu2023movable} cannot guarantee the joint optimality of the BS beamformer and the MA positions, as their performance highly relies on the selection of the initial point. Thus, in this paper, we investigate for the first time the jointly globally optimal design of the BS beamforming matrix and MA positions for a multiuser MA-enabled downlink system with a spatially discrete transmitter area to fully reveal the potential of MA-enabled systems. The main contributions of this paper can be summarized as follows:
\begin{itemize}
\item Taking into account the discrete nature of state-of-the-art electromechanical control hardware, we introduce a novel MA position model that facilitates the position optimization of the MA elements.
\item To obtain the jointly globally optimal MA positions and BS beamforming matrix, we propose a series of mathematical transformations that allow us to recast the considered challenging resource allocation problem into a tractable mixed integer nonlinear programming (MINLP) problem.
\item We propose an iterative algorithm exploiting the generalized Bender's decomposition (GBD) to obtain the globally optimal solution of the considered joint design problem. The proposed algorithm can serve as a performance benchmark for any suboptimal design, e.g., those in \cite{ma2022mimo}, \cite{zhu2023movable}.
\end{itemize}
The remainder of this paper is organized as follows: In Section \rom{2}, we introduce the system model for the considered MA-enabled multiuser multiple-input single-output (MISO) communication system with a spatially discrete transmitter area and formulate the corresponding resource allocation problem. In Section \rom{3}, the globally optimal solution for the MA positions and the BS beamforming matrix is provided. Section \rom{4} evaluates the performance of the proposed optimal design via numerical simulations, and Section \rom{5} concludes this paper.

\textit{Notation:} 
Vectors and matrices are denoted by boldface lower case and boldface capital letters, respectively. $\mathbb{R}^{N\times M}$ and $\mathbb{C}^{N\times M}$ represent the space of $N\times M$ real-valued and complex-valued matrices, respectively. $|\cdot|$ and $||\cdot||_2$ stand for the absolute value of a complex scalar and the $l_2$-norm of a vector, respectively. $(\cdot)^T$, $(\cdot)^*$, and $(\cdot)^H$ denote the transpose, the conjugate, and the conjugate transpose of their arguments, respectively. $\mathbf{I}_{N}$ refers to the identity matrix of dimension $N$. $\mathrm{Tr}(\cdot)$ is the trace of the input argument. $\mathbf{0}_{L}$ and $\mathbf{1}_L$ represent the all-zeros and all-ones vector of length $L$, respectively. $\mathbf{A}\succeq\mathbf{0}$ indicates that $\mathbf{A}$ is a positive semidefinite matrix. $\mathrm{diag}(\mathbf{a})$ denotes a diagonal matrix whose main diagonal elements are given by the entries of vector $\mathbf{a}$. 
$\mathrm{Re}\{\cdot\}$ and $\mathrm{Im}\{\cdot\}$ represent the real and imaginary parts of a complex number, respectively. $\mathbb{E}[\cdot]$ refers to statistical expectation.
\section{MA-Enhanced Multiuser System Model}

\begin{figure}
    \centering
    \includegraphics[width=3.8in]{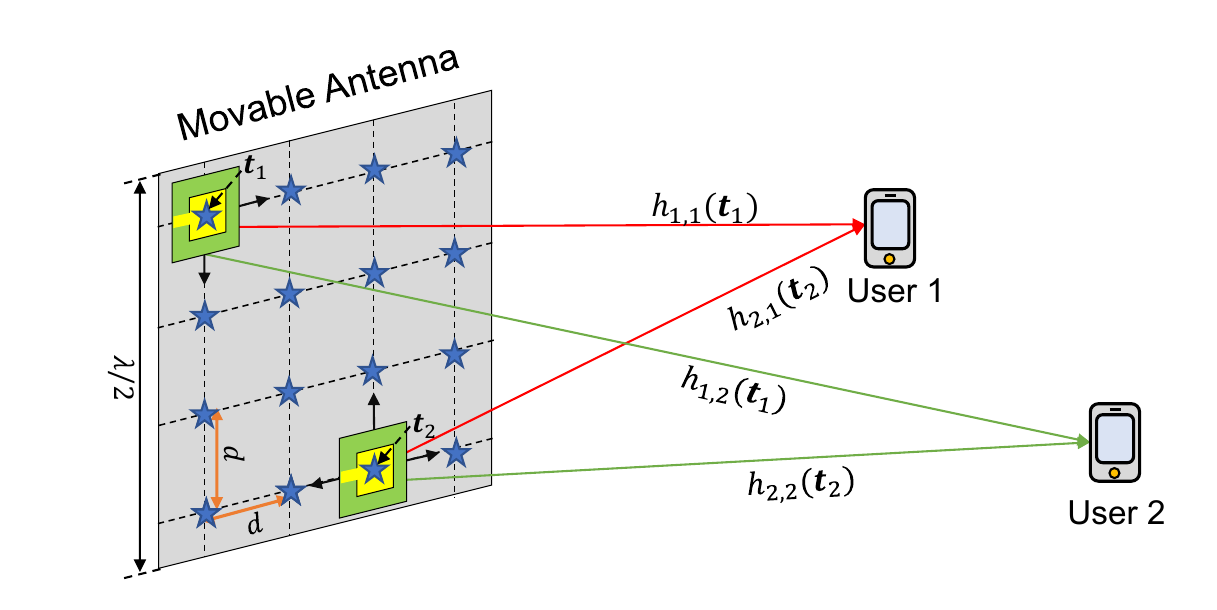}
    \caption{Transmission from $M=2$ movable antenna elements with $N=16$ possible discrete positions to $K=2$ users ($\bigstar$ symbols represent feasible antenna positions).}
    \label{fig:MA_system_model}
\end{figure}
\subsection{Channel Model}
We consider a multiuser wireless communication system comprising a BS and $K$ users. The BS is equipped with $M$ MA elements for serving $K$ single-antenna users. 
The positions of the MA elements can be adjusted simultaneously within a given two-dimensional transmitter area. To investigate the maximal possible performance of the considered MA-enabled system, we assume that perfect channel state information (CSI) with respect to the transmitter area is available at the BS \cite{ma2022mimo}, \cite{zhu2023movable} \footnote{The robust design of MA-enabled systems taking into account imperfect CSI is an interesting topic for future work.}. Since practical electromechanical devices can only provide a horizontal or vertical movement by a fixed increment $d$ in each step \cite{zhuravlev2015experimental}\cite{basbug2017design}, the transmitter area of the MA-enabled communication system is quantized \cite{basbug2017design}\footnote{The value of step size $d$ depends on the precision of the employed electromechanical devices and may vary in different MA-enabled systems. }. We collect the $N$ possible discrete positions of the MAs in set $\mathcal{P}=\{\mathbf{p}_1,\cdots, \mathbf{p}_N\}$, where the distance between the neighboring positions is equal to $d$ in horizontal or vertical direction, as shown in Fig. \ref{fig:MA_system_model}. Here, $\mathbf{p}_n=[x_n,y_n]$ represents the $n$-th candidate position with horizontal coordinate $x_n$ and vertical coordinate $y_n$. 
In other words, the feasible set of the position of the $m$-th MA element, $\mathbf{t}_m$, is given by $\mathcal{P}$, i.e., $\mathbf{t}_m\in\mathcal{P}$. In the considered MA-enabled MIMO system, the physical channel can be reconfigured by adjusting the positions of the MA elements. The channel vector between the $m$-th MA element and the $K$ users is denoted by $\mathbf{h}_m(\mathbf{t}_m)=[h_{m,1}(\mathbf{t}_m),\cdots,h_{m,K}(\mathbf{t}_m)]^T$ and depends on the position of the $m$-th MA element $\mathbf{t}_m$, where $h_{m,k}(\mathbf{t}_m)\in\mathbb{C}$ denotes the channel coefficient between the $m$-th MA element and the $k$-th user.
Next, we define a matrix $\hat{\mathbf{H}}_{m}=[\mathbf{h}_{m}(\mathbf{p}_1),\cdots,\mathbf{h}_{m}(\mathbf{p}_N)]\in\mathbb{C}^{K\times N}$ to collect the channel vectors from the $m$-th MA element to all $K$ users for all $N$ feasible discrete MA locations. Then, $\mathbf{h}_{m}(\mathbf{t}_m)$ can be expressed as
\begin{equation}
    \mathbf{h}_{m}(\mathbf{t}_m)=\hat{\mathbf{H}}_{m}\mathbf{b}_m,
\end{equation}
where $\mathbf{b}_m=\big[b_m[1],\cdots,b_m[N]\big]^T$. Here, $b_m[n]\in\left\{0,\hspace*{1mm}1\right\}$ with $\sum_{n=1}^{N}b_m[n]=1$ is a binary variable defining the position of the $m$-th MA element. 
For the considered MA-enabled multiuser MISO system, the channel matrix between the BS and the $K$ users, $\mathbf{H}=[\mathbf{h}_{1}(\mathbf{t}_1),\cdots,\mathbf{h}_{M}(\mathbf{t}_M)]\in\mathbb{C}^{K\times M}$, is then given by
\begin{equation}
    \mathbf{H}=\hat{\mathbf{H}}\mathbf{B},
\end{equation}
where matrices $\hat{\mathbf{H}}\in \mathbb{C}^{K\times MN}$ and $\mathbf{B}\in \mathbb{C}^{MN\times M}$ are defined as follows, respectively,
\begin{eqnarray}
\hat{\mathbf{H}}&\hspace*{-2mm}=\hspace*{-2mm}&
[\hat{\mathbf{H}}_{1},\cdots,\hat{\mathbf{H}}_{M}],\\
\mathbf{B}&\hspace*{-2mm}=\hspace*{-2mm}&
  \begin{bmatrix}
    \mathbf{b}_1 & \mathbf{0}_{N} & \mathbf{0}_{N} & \cdots & \mathbf{0}_{N}\\
    \mathbf{0}_{N} & \mathbf{b}_2 & \mathbf{0}_{N} &\cdots & \mathbf{0}_{N}\\
    \ldots & \ldots & \ldots & \ldots & \ldots\\
    \mathbf{0}_{N} & \mathbf{0}_{N} & \mathbf{0}_{N} & \hspace*{1mm}\cdots & \mathbf{b}_M
  \end{bmatrix}.
\end{eqnarray}
Next, we define $\hat{\mathbf{h}}_k\in\mathbb{C}^{1\times MN}$ as the $k$-th row of $\hat{\mathbf{H}}$. Then, the received signal of the $k$-th user $y_k$ is given by
\begin{equation}
    y_k=\hat{\mathbf{h}}_k\mathbf{B}\mathbf{W}\mathbf{s}+n_k,
\end{equation}
where $\mathbf{s}=[s_1,\cdots,s_K]^T\in\mathbb{C}^{K\times 1}$ represents the information-carrying symbol vector transmitted to the users. Here, $s_j\in\mathbb{C}$ denotes the symbol transmitted to the $j$-th user and $\mathbb{E}[|s_j|^2]=1$, $\mathbb{E}[s_j^*s_i]=0$, $j\neq i$, $\forall j,i\in\{1,\cdots,K\}$. $\mathbf{W}=[\mathbf{w}_1,\cdots,\mathbf{w}_K]$ denotes the linear beamforming matrix at the BS, where $\mathbf{w}_k\in \mathbb{C}^{M\times 1}$ represents the linear beamforming vector for the $k$-th user. $n_k\in\mathbb{C}$ stands for the additive white Gaussian noise at the $k$-th user with zero mean and variance $\sigma_k^2$. For notational simplicity, we define sets $\mathcal{K}\in\{1,\cdots,K\}$, $\mathcal{M}\in\{1,\cdots,M\}$, and $\mathcal{N}\in\{1,\cdots,N\}$ to collect the indices of the users, MA elements, and candidate positions of the MA elements, respectively.
\subsection{Resource Allocation Problem}
 By introducing an auxiliary matrix $\mathbf{X}=\mathbf{B}\mathbf{W},\ \mathbf{X}\in\mathbb{C}^{MN\times K}$, the received signal of the $k$-th user can be rewritten as follows
\begin{equation}
    y_k=\hat{\mathbf{h}}_k\mathbf{X}\mathbf{s}+n_k.
\end{equation}
Thus, the signal-to-interference-plus-noise ratio (SINR) of the $k$-th user is given by
\begin{equation}
    \mathrm{SINR}_k=\frac{|\hat{\mathbf{h}}_k^H\mathbf{x}_k|^2}{\sum_{k'\in\mathcal{K}\setminus\{k\}}|\hat{\mathbf{h}}_k^H\mathbf{x}_{k'}|^2+\sigma_{k}^2},
\end{equation}
where $\mathbf{x}_k$ denotes the $k$-th column of $\mathbf{X}$. Due to the limitation of antenna size, two MA elements cannot be placed arbitrarily close to each other. Thus, the center-to-center distance between any pair of MA elements must be greater than a minimum distance $D_{\mathrm{min}}$. We define distance matrix $\mathbf{D}\in\mathbb{C}^{N\times N}$, where element $D_{n,n'}$ in the $n$-th row and $n'$-th column of $\mathbf{D}$ denotes the distance between the $n$-th candidate position and the $n'$-th candidate position in $\mathcal{P}$. Thus, the minimum distance constraint between any pair of MA elements can be formulated as 
\begin{equation}
    \mathbf{b}_m^T\mathbf{D}\mathbf{b}_{m'}\geq D_{\mathrm{min}},\ m\neq m',\ \forall m, m'\in\mathcal{M}.
\end{equation}
In this paper, we aim to minimize the BS transmit power while guaranteeing a minimum required SINR for each user. The resulting resource allocation problem can be formulated as
\begin{eqnarray}
\label{Ori_Problem}
    &&\hspace*{-4mm}\underset{\mathbf{X},\mathbf{W},\mathbf{B}}{\mino}\hspace*{2mm}\sum_{k\in\mathcal{K}}\left\|\mathbf{w}_k\right\|_2^2\notag\\
    &&\hspace*{2mm}\mbox{s.t.}\hspace*{7mm} \mbox{C1:}\hspace*{1mm} \frac{|\hat{\mathbf{h}}_k^H\mathbf{x}_k|^2}{\sum_{k'\in\mathcal{K}\setminus\{k\}}|\hat{\mathbf{h}}_k^H\mathbf{x}_{k'}|^2+\sigma_{k}^2}\geq \gamma_{k},\hspace*{1mm}\forall k\in \mathcal{K},\notag\\
    &&\hspace*{14mm}\mbox{C2:}\hspace*{1mm}\mathbf{X}=\mathbf{B}\mathbf{W},\notag\\
    &&\hspace*{14mm}\mbox{C3:}\hspace*{1mm} b_m[n]\in \{0,1\},\hspace*{1mm} \forall n\in \mathcal{N}, \forall m\in \mathcal{M},\notag\\
    &&\hspace*{14mm}\mbox{C4:}\hspace*{1mm} \sum_{n=1}^Nb_m[n]=1,\hspace*{1mm}\forall m\in \mathcal{M},\notag\\
    &&\hspace*{14mm}\mbox{C5:}\hspace*{1mm} \mathbf{b}_m^T\mathbf{D}\mathbf{b}_{m'}\geq D_{\mathrm{min}},\ m\neq m',\ \forall m, m'\in\mathcal{M}.
\end{eqnarray}
Optimization problem \eqref{Ori_Problem} is nonconvex due to bilinear constraint C2, binary constraint C3, and binary quadratic constraint C5. Therefore, problem \eqref{Ori_Problem} is an NP-hard combinatorial problem. 
In the next section, we develop a GBD-based iterative algorithm to obtain the global optimum of \eqref{Ori_Problem}.

\section{Solution of Optimization Problem}
In this section, we leverage the GBD method in \cite{geoffrion1972generalized} to obtain the globally optimal solution to \eqref{Ori_Problem}. In the following, we first transform \eqref{Ori_Problem} into an equivalent MINLP problem, which provides a foundation for the development of the proposed GBD-based optimal algorithm.
\subsection{Problem Reformulation}
The globally optimal solution of \eqref{Ori_Problem} cannot be obtained by the GBD approach directly due to coupled constraint C2 \cite{geoffrion1972generalized} \cite{wu2023globally}. Thus, we present Lemma 1 to reformulate constraint C2 to facilitate the application of the GBD approach.
\begin{lemma}
Equality constraint C2 is equivalent to the following linear matrix inequality (LMI) constraints, 
\begin{eqnarray}
\mathrm{{C2a:}}&\hspace*{1mm}\label{sdp}
   \begin{bmatrix}
        \mathbf{U} & \mathbf{X} & \mathbf{B}\\
        \mathbf{X}^H & \mathbf{V} & \mathbf{W}^H\\
        \mathbf{B}^H & \mathbf{W} & \mathbf{I}_K
    \end{bmatrix}&\succeq \mathbf{0},\\
{\mathrm{C2b}}\mbox{:}&\hspace*{1mm}\label{DC}\mathrm{Tr}\big(\mathbf{U}\big)-M\leq 0,\vspace*{-2mm}
\end{eqnarray}
where $\mathbf{U}\in\mathbb{C}^{N\times N}$ and $\mathbf{V}\in\mathbb{C}^{K\times K}$ are two auxiliary optimization variables with $\mathbf{U}\succeq \mathbf{0}$ and $\mathbf{V}\succeq \mathbf{0}$.
\end{lemma}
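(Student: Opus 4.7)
The plan is to prove the equivalence in two directions. For the forward direction (C2 implies the existence of $\mathbf{U}$ and $\mathbf{V}$ satisfying C2a and C2b), I would exhibit the explicit choice $\mathbf{U} = \mathbf{B}\mathbf{B}^H$ and $\mathbf{V} = \mathbf{W}^H\mathbf{W}$. Substituting $\mathbf{X} = \mathbf{B}\mathbf{W}$, the block matrix in C2a collapses into the rank factorization
\begin{equation*}
\begin{bmatrix} \mathbf{B} \\ \mathbf{W}^H \\ \mathbf{I} \end{bmatrix} \begin{bmatrix} \mathbf{B}^H & \mathbf{W} & \mathbf{I} \end{bmatrix},
\end{equation*}
which is manifestly positive semidefinite. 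To verify C2b, I would exploit the block-diagonal structure of $\mathbf{B}$ together with constraints C3 and C4: each $\mathbf{b}_m$ is a standard basis vector, so $\|\mathbf{b}_m\|_2^2 = 1$, and hence $\mathrm{Tr}(\mathbf{B}\mathbf{B}^H) = \sum_{m \in \mathcal{M}} \|\mathbf{b}_m\|_2^2 = M$. This shows that C2b is satisfied with equality.

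For the converse direction, the key tool is the Schur complement taken with respect to the identity block in the $(3,3)$ position of the matrix in C2a. This reduces C2a to the compact $2 \times 2$ block LMI
\begin{equation*}
\begin{bmatrix} \mathbf{U} - \mathbf{B}\mathbf{B}^H & \mathbf{X} - \mathbf{B}\mathbf{W} \\ \mathbf{X}^H - \mathbf{W}^H\mathbf{B}^H & \mathbf{V} - \mathbf{W}^H\mathbf{W} \end{bmatrix} \succeq \mathbf{0}.
\end{equation*}
Reading off the top-left diagonal block yields $\mathbf{U} \succeq \mathbf{B}\mathbf{B}^H$, so $\mathrm{Tr}(\mathbf{U}) \geq \mathrm{Tr}(\mathbf{B}\mathbf{B}^H) = M$ by the same trace identity derived above. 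Combined with the reverse inequality $\mathrm{Tr}(\mathbf{U}) \leq M$ supplied by C2b, this forces $\mathrm{Tr}(\mathbf{U} - \mathbf{B}\mathbf{B}^H) = 0$, and since a PSD matrix with zero trace is itself the zero matrix, we obtain $\mathbf{U} = \mathbf{B}\mathbf{B}^H$.

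The critical step, and the one I expect to require the most care, is concluding from the vanishing of the top-left block that the coupled off-diagonal block must also vanish. I would invoke the standard fact that in any PSD matrix a zero principal block forces every off-diagonal block associated with it to be zero; this can be justified entry-wise by inspecting $2 \times 2$ principal minors of the form $\bigl[\begin{smallmatrix} 0 & u \\ u^* & v \end{smallmatrix}\bigr]$, whose non-negative determinant $-|u|^2$ compels $u = 0$. Applied to the reduced LMI, this yields $\mathbf{X} - \mathbf{B}\mathbf{W} = \mathbf{0}$, which is exactly C2. Assembling the two implications completes the equivalence.
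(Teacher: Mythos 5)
Your proof is correct. The paper itself disposes of the heavy lifting by citation: it invokes an external result (\cite[Lemma 1]{6698281}) asserting that C2 is equivalent to C2a together with $\mathrm{Tr}(\mathbf{U}-\mathbf{B}\mathbf{B}^H)\leq 0$, and then only proves the small remaining step that $\mathrm{Tr}(\mathbf{B}\mathbf{B}^H)=M$ under C3 and C4, which turns that inequality into C2b. You instead reconstruct the entire equivalence from first principles: the forward direction via the explicit Gram factorization with $\mathbf{U}=\mathbf{B}\mathbf{B}^H$, $\mathbf{V}=\mathbf{W}^H\mathbf{W}$, and the converse via the Schur complement with respect to the identity block, the trace-pinching argument $M\leq\mathrm{Tr}(\mathbf{U})\leq M$ forcing $\mathbf{U}=\mathbf{B}\mathbf{B}^H$, and the standard fact that a vanishing principal block of a PSD matrix annihilates its off-diagonal companions (your $2\times 2$ minor argument), yielding $\mathbf{X}=\mathbf{B}\mathbf{W}$. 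Both routes rest on the same identity $\mathrm{Tr}(\mathbf{b}_m\mathbf{b}_m^H)=1$ from the binary structure of $\mathbf{B}$, and your converse is presumably the content of the cited lemma anyway; what your version buys is self-containedness and makes explicit that the equivalence is only valid on the feasible set where C3 and C4 hold (otherwise $\mathrm{Tr}(\mathbf{B}\mathbf{B}^H)=M$ fails), a dependence the paper leaves implicit. One cosmetic caveat: the block dimensions as printed in the paper ($\mathbf{U}\in\mathbb{C}^{N\times N}$, $\mathbf{I}_K$ in the $(3,3)$ position, versus $\mathbf{B}\in\mathbb{C}^{MN\times M}$) are not mutually consistent, so your factorization implicitly assumes the corrected sizes ($\mathbf{U}\in\mathbb{C}^{MN\times MN}$ and an $M\times M$ identity); this is a typo in the statement, not a flaw in your argument.
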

\begin{proof}
Based on \cite[Lemma 1]{6698281}, equality constraint C2 is equivalent to LMI constraint C2a and inequality constraint:
\begin{eqnarray}\label{DC2}
\overline{\mbox{C2b}}\mbox{:}&
    \mathrm{Tr}\left(\mathbf{U}-\mathbf{B}\mathbf{B}^H\right)&\leq0.\vspace*{-2mm}
\end{eqnarray}
The left-hand side of \eqref{DC2} can be rewritten as
\begin{equation}
         \mathrm{Tr}\left(\mathbf{U}-\mathbf{B}\mathbf{B}^H\right)\overset{(a)}{=}\mathrm{Tr}\left(\mathbf{U}\right)-\sum_{m=1}^{M}\mathrm{Tr}\left(\mathbf{b}_m\mathbf{b}_m^H\right)\overset{(b)}{=}
         \mathrm{Tr}\left(\mathbf{U}\right)-M,
\end{equation}
where the above equalities (a) and (b) hold due to the additivity of the matrix trace and the definition of binary decision vector $\mathbf{b}_m,\forall m$, respectively. Thus, inequalities C2b and $\overline{\mbox{C2b}}$ are equivalent, which completes our proof.
\end{proof}
Note that C2a and ${\mbox{C2b}}$ are both convex constraints. On the other hand, the minimum distance constraint C5 in \eqref{Ori_Problem} is still non-convex. Here, we reformulate the quadratic inequality constraint C5 into three linear inequality constraints by exploiting the following lemma, the proof of which can be found in \cite{glover1974converting}.
\begin{lemma}
    The inequality constraint C5 is equivalent to the following linear inequality constraints
\begin{eqnarray}
    &&\mathrm{C5a:}\hspace*{2mm}\sum_{i\in\mathcal{N}}\sum_{j\in\mathcal{N}}D_{i,j}y_{m,m',i,j}\geq D_{\mathrm{min}},\hspace*{2mm}m\neq m',\hspace*{2mm}\forall m,m'\in\mathcal{M},\notag\\
    &&\mathrm{C5b:}\hspace*{2mm}y_{m,m',i,j}\leq \min\left\{b_m[i],b_m[j]\right\},\hspace*{2mm}m\neq m',\hspace*{2mm} \forall m,m'\in\mathcal{M},\hspace*{2mm}\forall i,j\in\mathcal{N},\\
    &&\mathrm{C5c:}\hspace*{2mm}y_{m,m',i,j}\geq b_m[i]+b_m[j]-1,\hspace*{2mm}m\neq m',\hspace*{2mm} \forall m,m'\in\mathcal{M},\hspace*{2mm}\forall i,j\in\mathcal{N}\notag,
\end{eqnarray}
where $y_{m,m',i,j}$ is a binary auxiliary variable. For the sake of notation simplicity, we define a binary vector $\mathbf{y}=[y_{1,2,1,1},\cdots,y_{m,m',i,j},\cdots,y_{M-1,M,N,N}]$, $m\neq m'$, $\forall m,m'\in\mathcal{M}$ and $\forall i,j\in\mathcal{N}$ to collect all binary auxiliary variables.
\end{lemma}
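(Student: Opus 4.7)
The plan is to apply Glover's classical linearization for products of binary variables, which is the standard tool for converting a binary quadratic expression into an equivalent set of linear inequalities. The intended semantics of $y_{m,m',i,j}$ is that it represents the product $b_m[i]\,b_{m'}[j]$ that arises when one expands
\begin{equation}
\mathbf{b}_m^T\mathbf{D}\mathbf{b}_{m'} \;=\; \sum_{i\in\mathcal{N}}\sum_{j\in\mathcal{N}} D_{i,j}\,b_m[i]\,b_{m'}[j].
\end{equation}
Under this identification, constraint C5a becomes literally constraint C5, so the entire content of the lemma reduces to showing that C5b and C5c, combined with binarity, force $y_{m,m',i,j}=b_m[i]\,b_{m'}[j]$ for every index tuple.

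The first step I would take is a four-case tabulation over the binary pair $(b_m[i],b_{m'}[j])$. Whenever at least one of these entries vanishes, C5b drives $y_{m,m',i,j}\le 0$, and binarity of $y_{m,m',i,j}$ pins it to zero, matching the product. When both entries equal one, C5c pushes $y_{m,m',i,j}\ge 1$ while C5b caps it at one, again matching the product. Plugging the resulting pointwise identity into C5a recovers C5, which establishes the forward direction.

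The second step is the converse: starting from any $(\mathbf{b}_m,\mathbf{b}_{m'})$ feasible for C5, I would simply \emph{define} $y_{m,m',i,j}=b_m[i]\,b_{m'}[j]$. Verifying C5a is then exactly the quadratic expansion displayed above, while C5b and C5c follow from the elementary identities $xy=\min\{x,y\}$ and $xy\ge x+y-1$, both valid for $x,y\in\{0,1\}$. At this point the two directions combine into the claimed equivalence.

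The only subtle point, and the one I would watch most carefully, is that the binarity of $y_{m,m',i,j}$ is essential for the equivalence to be \emph{exact} rather than merely a relaxation. If $y_{m,m',i,j}$ were allowed to take continuous values in $[0,1]$, the pair C5b--C5c would describe the convex hull of the product set only at integral points and would admit fractional interior solutions that do not correspond to any product $b_m[i]\,b_{m'}[j]$. Since Lemma~2 explicitly declares $y_{m,m',i,j}$ to be binary, this loophole is closed and no further technical difficulty is expected; the proof is structural rather than computational, which is precisely why Glover's linearization applies cleanly here.
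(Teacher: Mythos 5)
Your argument is correct and is exactly the Glover linearization that the paper invokes by citation (the paper defers the proof entirely to \cite{glover1974converting}, so your four-case tabulation and the converse direction fill in precisely the argument the authors had in mind). One minor quibble that does not affect correctness: your closing remark overstates the role of binarity of the auxiliary variable --- for $y_{m,m',i,j}\in[0,1]$ the pair C5b--C5c already pins $y_{m,m',i,j}$ to the product $b_m[i]\,b_{m'}[j]$ whenever the $b$-entries are binary (the upper bound $\min\{b_m[i],b_{m'}[j]\}$ forces $y_{m,m',i,j}=0$ in the zero cases and the lower bound forces $y_{m,m',i,j}=1$ in the all-ones case), so the $[0,1]$ relaxation is still exact there.
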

\par
In addition, we observe that SINR constraint C1 is also non-convex. We note that if an arbitrary $\mathbf{x}_k$ satisfies the constraints in \eqref{Ori_Problem}, 
$\mathbf{x}_k$ multiplied by an arbitrary phase shift $e^{j\phi}$ also satisfies the constraints while the value of the objective function remains unchanged. Thus, we can leverage the following lemma to transform the non-convex SINR constraint in C1 into two equivalent convex constraints, cf. also \cite{luo2006introduction}.
\begin{lemma}
Without loss of optimality, we assume that $\hat{\mathbf{h}}_k^H\mathbf{x}_k\in\mathbb{R}$. Then, constraint C1 can be equivalently rewritten as
\vspace*{-2mm}
\begin{eqnarray}
&& \hspace*{-12mm}{{\mathrm{C1a}}}\mbox{:}\hspace*{1mm}\sqrt{\sum_{k'\in\mathcal{K}\setminus\{k\}}|\hat{\mathbf{h}}_k^H\mathbf{x}_{k'}|^2+\sigma_k^2}-\frac{\operatorname{Re}\{{\mathbf{h}}_{k}^H\mathbf{x}_k\}}{\sqrt{\gamma_k}}\leq 0, \hspace*{1mm}\forall k\in\mathcal{K},\\
&&\hspace*{-12mm}\mathrm{C1b:}\hspace*{1mm}\operatorname{Im}\{{\mathbf{h}}_{k}^H\mathbf{x}_k\}=0,\hspace*{1mm}\forall k\in\mathcal{K}.\vspace*{-2mm}
\end{eqnarray}
C1a and C1b are both convex constraints.\vspace*{-2mm}
\end{lemma}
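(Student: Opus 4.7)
The plan is to establish the equivalence in two stages: first justify the ``without loss of optimality'' assumption by exploiting a phase-rotation symmetry of problem \eqref{Ori_Problem}, and then use that assumption to rewrite the SINR inequality as a second-order cone condition plus an affine equation.

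For the first stage, I would observe that for any feasible tuple $(\mathbf{X},\mathbf{W},\mathbf{B})$ and any phase $\phi_k\in\mathbb{R}$, replacing $\mathbf{x}_k$ by $e^{j\phi_k}\mathbf{x}_k$ and $\mathbf{w}_k$ by $e^{j\phi_k}\mathbf{w}_k$ yields another feasible tuple with exactly the same objective value. Indeed, $\mathbf{B}$ is real-valued, so C2 is preserved by a per-column phase rotation of $(\mathbf{X},\mathbf{W})$; the objective $\sum_k\|\mathbf{w}_k\|_2^2$ is unchanged because $|e^{j\phi_k}|=1$; and C1 depends only on the squared magnitudes $|\hat{\mathbf{h}}_k^H\mathbf{x}_k|^2$ and $|\hat{\mathbf{h}}_k^H\mathbf{x}_{k'}|^2$, both of which are phase-invariant. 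Consequently, for every feasible point one can choose $\phi_k$ so that $e^{j\phi_k}\hat{\mathbf{h}}_k^H\mathbf{x}_k=|\hat{\mathbf{h}}_k^H\mathbf{x}_k|\in\mathbb{R}_{\geq 0}$ for each $k$, which justifies restricting the search to points satisfying $\hat{\mathbf{h}}_k^H\mathbf{x}_k\in\mathbb{R}_{\geq 0}$ without enlarging the optimal value.

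Under this restriction, $\operatorname{Im}\{\hat{\mathbf{h}}_k^H\mathbf{x}_k\}=0$ (which is exactly C1b) and $|\hat{\mathbf{h}}_k^H\mathbf{x}_k|=\operatorname{Re}\{\hat{\mathbf{h}}_k^H\mathbf{x}_k\}\geq 0$. Since the denominator in C1 is strictly positive, I would multiply through, take square roots of both sides of the resulting $|\hat{\mathbf{h}}_k^H\mathbf{x}_k|^2\geq\gamma_k\big(\sum_{k'\neq k}|\hat{\mathbf{h}}_k^H\mathbf{x}_{k'}|^2+\sigma_k^2\big)$, and rearrange to obtain
\begin{equation*}
\sqrt{\sum_{k'\in\mathcal{K}\setminus\{k\}}|\hat{\mathbf{h}}_k^H\mathbf{x}_{k'}|^2+\sigma_k^2}\;\leq\;\frac{\operatorname{Re}\{\hat{\mathbf{h}}_k^H\mathbf{x}_k\}}{\sqrt{\gamma_k}},
\end{equation*}
which is precisely C1a. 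Conversely, C1a together with C1b immediately implies C1 by squaring, so the transformation is an equivalence on the reduced feasible set. Convexity of C1a follows because its left-hand side is the composition of a Euclidean norm with an affine map, hence convex, while its right-hand side is a linear functional of $\mathbf{x}_k$; C1b is a single real-linear equation in $\mathbf{x}_k$ and is therefore affine.

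The main obstacle I anticipate is making the ``without loss of optimality'' step watertight, since one must verify that the per-user phase rotation is consistent with \emph{every} other constraint of \eqref{Ori_Problem}, not just C1 and the objective; the key enabling fact is that $\mathbf{B}$ is real-valued and block-diagonal in the $\mathbf{b}_m$'s, so rotating the $k$-th columns of $\mathbf{X}$ and $\mathbf{W}$ preserves C2 exactly while C3--C5 do not involve $\mathbf{W}$ or $\mathbf{X}$ at all. Once this symmetry argument is secured, the remainder of the proof reduces to the routine square-root manipulation and a one-line convexity check.
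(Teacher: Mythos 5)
Your proof is correct and follows essentially the same route the paper takes: the phase-rotation symmetry you establish in the first stage is exactly the observation the authors state in the paragraph preceding the lemma, and the subsequent second-order-cone rewriting is the standard argument the paper delegates to the cited reference (Luo and Yu, Section III). You have simply written out in full the details that the paper outsources, and all steps (preservation of C2 under per-column phase rotation, the square-root manipulation in both directions, and the convexity check) are sound.
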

\vspace*{-2mm}
\begin{proof}
Please refer to \cite[Section \rom{3}]{luo2006introduction}.
\end{proof}
Thus, resource allocation optimization problem \eqref{Ori_Problem} can be equivalently reformulated as follows 
\begin{eqnarray}
\label{Reform_Problem}
    &&\hspace*{-4mm}\underset{\mathbf{X},\mathbf{W},\mathbf{B},\mathbf{U},\mathbf{V},\mathbf{y}}{\mino}\hspace*{2mm}\sum_{k\in\mathcal{K}}\left\|\mathbf{w}_k\right\|_2^2\\
    &&\hspace*{2mm}\mbox{s.t.}\hspace*{12mm} \mbox{C1a},\mbox{C1b}, \mbox{C2a},{\mbox{C2b}},\mbox{C3},\mbox{C4},\mbox{C5a},\mbox{C5b},\mbox{C5c}\notag.
\end{eqnarray}
\begin{remark}
The MINLP problem in \eqref{Reform_Problem} is a convex optimization problem with respect to the continuous variables $\mathbf{X},\mathbf{W},\mathbf{U}$, and $\mathbf{V}$ if the discrete variables $\mathbf{B}$ and $\mathbf{y}$ are fixed. Meanwhile, it is a linear programming problem with respect to the discrete variables $\mathbf{B}$ and $\mathbf{y}$ if the continuous variables $\mathbf{X},\mathbf{W},\mathbf{U}$, and $\mathbf{V}$ are fixed. Hence, the GBD approach is guaranteed to converge to the globally optimal solution of \eqref{Reform_Problem} \cite{geoffrion1972generalized}.
\end{remark}
\subsection{GBD Procedure}
In order to obtain the globally optimal solution of the MILNP problem in \eqref{Reform_Problem}, we develop an iterative algorithm based on the GBD method. Specifically, the MILNP problem in \eqref{Reform_Problem} is first decomposed into a primal problem and a master problem. In each iteration, we update an upper bound (UB) of the objective function in \eqref{Reform_Problem} by solving the primal problem with the discrete variables $\mathbf{B}$ and $\mathbf{y}$ being fixed. In addition, we solve the master problem by fixing the continuous variables $\mathbf{X},\mathbf{W},\mathbf{U}$, and $\mathbf{V}$ to update a lower bound (LB) of \eqref{Reform_Problem}. In the following, we detail the formulation and solution of the primal and master problems in the $i$-th iteration of the GBD algorithm, followed by an explanation of the overall GBD algorithm.

\subsubsection{Primal Problem}
Using the discrete variables $\mathbf{B}^{(i-1)}$ and $\mathbf{y}^{(i-1)}$ obtained from the master problem in the $(i-1)$-th iteration, the primal problem in the $i$-th iteration is given by
\begin{eqnarray}
\label{Primal_problem}
    &&\hspace*{-6mm}\underset{\mathbf{X},\mathbf{W},\mathbf{U},\mathbf{V}}{\mino}\hspace*{2mm}\sum_{k\in\mathcal{K}}\left\|\mathbf{w}_k\right\|_2^2\notag\\
    &&\hspace*{0mm}\mbox{s.t.}\hspace*{8mm} \mbox{C1a}, \mbox{C1b},{\mbox{C2b}}\\
    &&\hspace*{13mm}\mbox{C2a:}\hspace*{1mm}\begin{bmatrix}
    \mathbf{U} & \mathbf{X} & \mathbf{B}^{(i-1)} \\
    \mathbf{X}^H & \mathbf{V} & \mathbf{W}^H \\
    (\mathbf{B}^{(i-1)})^T & \mathbf{W} & \mathbf{I}_K \\
    \end{bmatrix}\succeq \mathbf{0}.\notag
\end{eqnarray}
The primal problem in \eqref{Primal_problem} is a convex problem, which can be solved by a standard convex programming solver such as CVX \cite{grant2008cvx}. However, the problem \eqref{Primal_problem} is not feasible for all possible discrete variables $\mathbf{B}$ and $\mathbf{y}$. If the primal problem \eqref{Primal_problem} in the $i$-th iteration is feasible, the optimal solution of \eqref{Primal_problem} is denoted by $\mathbf{X}^{(i)}$, $\mathbf{W}^{(i)}$, $\mathbf{U}^{(i)}$, and $\mathbf{V}^{(i)}$ and iteration index $i$ is included in the index set of the feasible iterations $\mathcal{F}$. Next, let $\bm{\Lambda}=\left\{\mu_k,\nu_k,\bm{\Xi},{\xi}\right\}$ denote the collection of Lagrangian multipliers of \eqref{Primal_problem}, where $\mu_k\in\mathbb{R},\nu_k\in\mathbb{R},\bm{\Xi}\in\mathbb{C}^{(N+K+M)\times (N+K+M)}$, and ${\xi}\in\mathbb{R}$ represent the dual variables for constraints C1a, C1b, C2a, and C2b, respectively. The Lagrangian function of the primal problem is given by 
\begin{equation}\label{Largrangian_conf}
    \begin{aligned}
            \mathcal{L}(\mathbf{X},\mathbf{W},\mathbf{U},\mathbf{V},\mathbf{B}^{(i-1)},\bm{\Lambda})&=\sum_{k\in\mathcal{K}}\left\|\mathbf{w}_k\right\|_2^2+f(\mathbf{X},\mathbf{W},\mathbf{U},\mathbf{V},\bm{\Lambda})+2\mathrm{Re}\left\{\mathrm{Tr}\big(\mathbf{B}^{(i-1)}\bm{\Xi}_{31}\big)\right\},
    \end{aligned}\vspace*{-2mm}
\end{equation}
where
\begin{equation}\notag
\begin{aligned}
f(\mathbf{X},\mathbf{W},\mathbf{U},\mathbf{V},\bm{\Lambda})&=\sum_{k\in\mathcal{K}}\mu_k\left(\sqrt{\sum_{k'\in\mathcal{K}\setminus\{k\}}\hspace*{-2mm}|\hat{\mathbf{h}}_k^H\mathbf{x}_{k'}|^2+\sigma_k^2}-\frac{\operatorname{Re}\{\hat{\mathbf{h}}_k^H\mathbf{x}_k\}}{\sqrt{\gamma_k}}\right)\\
    &+\sum_{k\in\mathcal{K}}\nu_k\big(\operatorname{Im}\{\hat{\mathbf{h}}_k^H\mathbf{x}_k\}\big)+\mathrm{Tr}\big(\mathbf{U}\bm{\Xi}_{11}\big)+\mathrm{Tr}\big(\mathbf{V}\bm{\Xi}_{22}\big),\\
    &+2\mathrm{Re}\left\{\mathrm{Tr}\big({\bm{\Xi}^H_{32}}\mathbf{W}\big)+\mathrm{Tr}\big(\mathbf{X}\bm{\Xi}_{21}\big)\right\}.
\end{aligned}
\end{equation}
Here, the dual matrix $\bm{\Xi}$ is decomposed into nine sub-matrices as follows:
\begin{equation}\label{Qi}
  \bm{\Xi}=\left[ \begin{array}{ccc}
        \bm{\Xi}_{11} & \bm{\Xi}_{21}^H & \bm{\Xi}_{31}^H\\
        \bm{\Xi}_{21} & \bm{\Xi}_{22} & \bm{\Xi}_{32}^H\\
        \bm{\Xi}_{31} & \bm{\Xi}_{32} & \bm{\Xi}_{33}
    \end{array}\right],\vspace*{-2mm}
\end{equation}
where $\bm{\Xi}_{11}\in\mathbb{C}^{N\times N}$, $\bm{\Xi}_{21}\in\mathbb{C}^{K\times N}$, $\bm{\Xi}_{22}\in\mathbb{C}^{K\times K}$, $\bm{\Xi}_{31}\in\mathbb{C}^{M\times N}$, $\bm{\Xi}_{32}\in\mathbb{C}^{M\times K}$, and $\bm{\Xi}_{33}\in\mathbb{C}^{M\times M}$ denote the corresponding sub-matrices of $\bm{\Xi}$. 

 On the other hand, if the primal problem \eqref{Primal_problem} in the $i$-th iteration is not feasible for the given $\mathbf{B}^{(i-1)}$ and $\mathbf{y}^{(i-1)}$, we solve the following feasiblity-check problem:
\begin{eqnarray}
\label{Feasible_problem}
    &&\hspace*{-6mm}\underset{\mathbf{X},\mathbf{W},\mathbf{U},\mathbf{V},\bm{\lambda}}{\mino}\hspace*{2mm}\sum_{k\in\mathcal{K}}\lambda_k\notag\\
    &&\hspace*{0mm}\mbox{s.t.}\hspace*{8mm}\mbox{C1b},\mbox{C2a},{\mbox{C2b}}, \notag\\
    &&\hspace*{13mm}\overline{\mbox{C1a}}\mbox{:}\hspace*{1mm}\sqrt{\sum_{k'\neq k}|\hat{\mathbf{h}}_k^H\mathbf{x}_{k'}|^2+\sigma_k^2}-\frac{\operatorname{Re}\{{\mathbf{h}}_{k}^H\mathbf{x}_k\}}{\sqrt{\gamma_k}}\leq \lambda_k,\\
    &&\hspace*{21mm}\forall k\in \mathcal{K},\notag\\
    &&\hspace*{13mm}\mbox{C6}\mbox{:}\hspace*{1mm}\lambda_k\geq 0, \forall k\in\mathcal{K}\notag,\vspace*{-2mm}
\end{eqnarray}
where $\bm{\lambda}=[\lambda_1,\cdots,\lambda_K]$ denotes an auxiliary optimization variable. Note that \eqref{Feasible_problem} is always feasible and convex. Thus, we can solve \eqref{Feasible_problem} with a standard CVX solver. Similar to \eqref{Primal_problem}, the optimal solution of \eqref{Feasible_problem} is denoted by $\widetilde{\mathbf{X}}^{(i)},\widetilde{\mathbf{W}}^{(i)},\widetilde{\mathbf{U}}^{(i)}$, and $\widetilde{\mathbf{V}}^{(i)}$. Then, the iteration index $i$ is included in the index set of the infeasible iterations $\mathcal{I}$. Furthermore, we formulate the Lagrangian function of \eqref{Feasible_problem} as 
\begin{equation}\label{Largrangian_feasible}
\begin{aligned}
    \widetilde{\mathcal{L}}(\mathbf{X},\mathbf{W},\mathbf{U},\mathbf{V},\mathbf{B}^{(i-1)},\widetilde{\bm{\Lambda}})&={f}(\mathbf{X},\mathbf{W},\mathbf{U},\mathbf{V},\widetilde{\bm{\Lambda}})\\
    &+2\mathrm{Re}\left\{\mathrm{Tr}\big(\mathbf{B}^{(i-1)}\Tilde{\bm{\Xi}}_{31}\big)\right\},
\end{aligned}\vspace*{-2mm}
\end{equation}
where $\widetilde{\bm{\Lambda}}=[\widetilde{\mu}_k,\widetilde{\nu_k},\widetilde{\bm{\Xi}},\widetilde{\xi}]$ is the collection of the dual variables $\widetilde{\mu}_k,\widetilde{\nu_k},\widetilde{\bm{\Xi}}\in\mathbb{C}^{(N+K+M)\times (N+K+M)}$, and $\widetilde{\xi}$ for constraints $\overline{\mbox{C1a}}$, C1b, C2a, and C2b, respectively. Similar to the notation in \eqref{Qi}, $\widetilde{\bm{\Xi}}_{31}$ denotes the submatrix of $\widetilde{\bm{\Xi}}$ formed by rows $\{N+K+1,\cdots,N+K+M\}$ and columns $\{1,\cdots,N\}$. The solution of the feasibility-check problem \eqref{Feasible_problem} is used to separate the infeasible solutions $\mathbf{B}^{(i-1)}$ and $\mathbf{y}^{(i-1)}$ from the feasible set of the master problem in the following iterations.
\subsubsection{Master Problem}
The master problem is formulated based on nonlinear convex duality theory \cite{geoffrion1972generalized}. Without loss of generality, we recast the master problem into the following epigraph form by introducing auxiliary optimization variable $\eta$:
\begin{eqnarray}
\label{Master_problem}
    &&\hspace*{-8mm}\underset{\mathbf{B},\mathbf{y},\eta}{\mino}\hspace*{2mm}\eta\notag\\
    &&\hspace*{-4mm}\mbox{s.t.}\hspace*{10mm}\mbox{C3}, \mbox{C4},\mbox{C5a}, \mbox{C5b},\mbox{C5c}\notag\\
    &&\hspace*{10mm}\mbox{C7a}\mbox{:}\hspace*{1mm}\eta\geq\min_{\substack{\mathbf{X},\mathbf{W},\\\mathbf{U},\mathbf{V}}}\mathcal{L}(\mathbf{X},\mathbf{W},\mathbf{U},\mathbf{V},\mathbf{B},\bm{\Lambda}^{(t)}),\hspace*{0mm}\forall t\in\{1,\cdots,i\}\cap\mathcal{F},\\
    &&\hspace*{10mm}\mbox{C7b}\mbox{:}\hspace*{1mm}0\geq\min_{\substack{\widetilde{\mathbf{X}},\widetilde{\mathbf{W}},\\\widetilde{\mathbf{U}},\widetilde{\mathbf{V}}}}\widetilde{\mathcal{L}}(\widetilde{\mathbf{X}},\widetilde{\mathbf{W}},\widetilde{\mathbf{U}},\widetilde{\mathbf{V}},\mathbf{B},\widetilde{\bm{\Lambda}}^{(t)}),\hspace*{0mm}\forall t\in\{1,\cdots,i\}\cap\mathcal{I}\notag,\vspace*{-2mm}
\end{eqnarray}
where $\bm{\Lambda}^{(t)}$ and $\widetilde{\bm{\Lambda}}^{(t)}$ denote the collections of the optimal Lagrangian multipliers for feasible primal problem \eqref{Primal_problem} and feasibility-check problem \eqref{Feasible_problem} in the $t$-th iteration, respectively. Constraints C7a and C7b correspond to the optimality cut and feasibility cut, respectively \cite{geoffrion1972generalized}. The inner minimization in C7a and C7b can be obtained from the optimal solutions of the primal problem \eqref{Primal_problem} and the feasibility-check problem \eqref{Feasible_problem} exploiting the following lemma:
\begin{lemma}
Inequality constraints C7a and C7b can be recast as the following two linear inequalities:
\begin{eqnarray}
\label{Recast_C5}
    &&\hspace*{2mm}\overline{\mathrm{C7a}}\mbox{:}\hspace*{1mm}\eta\geq \sum_{k\in\mathcal{K}}\left\|\mathbf{w}_k^{(t)}\right\|_2^2+f(\mathbf{X}^{(t)},\mathbf{W}^{(t)},\mathbf{U}^{(t)},\mathbf{V}^{(t)},\bm{\Lambda}^{(t)})+2\mathrm{Re}\left\{\mathrm{Tr}\big(\mathbf{B}\bm{\Xi}_{31}\big)\right\},\notag\\
    &&\hspace*{12mm}\forall t\in\{1,\cdots,i\}\cap\mathcal{F},\\
    &&\hspace*{2mm}\overline{\mathrm{C7b}}\mbox{:}\hspace*{1mm}0\geq f(\mathbf{X}^{(t)},\mathbf{W}^{(t)},\mathbf{U}^{(t)},\mathbf{V}^{(t)},\bm{\Lambda}^{(t)})+2\mathrm{Re}\left\{\mathrm{Tr}\big(\mathbf{B}\bm{\Xi}_{31}\big)\right\},\hspace*{0mm}\forall  t\in\{1,\cdots,i\}\cap\mathcal{I},\vspace*{-2mm}
\end{eqnarray}
respectively.
\end{lemma}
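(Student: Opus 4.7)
The plan is to exploit the fact that the Lagrangian $\mathcal{L}$ defined in \eqref{Largrangian_conf} is \emph{affine} in the integer block $\mathbf{B}$, and that its $\mathbf{B}$-dependent part is completely decoupled from $(\mathbf{X},\mathbf{W},\mathbf{U},\mathbf{V})$. Concretely, I would split
\begin{equation}
\mathcal{L}(\mathbf{X},\mathbf{W},\mathbf{U},\mathbf{V},\mathbf{B},\bm{\Lambda}^{(t)}) = g(\mathbf{X},\mathbf{W},\mathbf{U},\mathbf{V},\bm{\Lambda}^{(t)}) + 2\mathrm{Re}\bigl\{\mathrm{Tr}(\mathbf{B}\bm{\Xi}_{31}^{(t)})\bigr\},
\end{equation}
where $g:=\sum_{k\in\mathcal{K}}\|\mathbf{w}_k\|_2^2 + f(\mathbf{X},\mathbf{W},\mathbf{U},\mathbf{V},\bm{\Lambda}^{(t)})$ gathers every term that depends on the continuous variables. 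Because the $\mathbf{B}$-term is constant with respect to the inner minimization, it can be pulled outside the $\min$ operator in C7a, and the claim reduces to the identity
\begin{equation}
\min_{\mathbf{X},\mathbf{W},\mathbf{U},\mathbf{V}} g(\mathbf{X},\mathbf{W},\mathbf{U},\mathbf{V},\bm{\Lambda}^{(t)}) = \sum_{k\in\mathcal{K}}\|\mathbf{w}_k^{(t)}\|_2^2 + f(\mathbf{X}^{(t)},\mathbf{W}^{(t)},\mathbf{U}^{(t)},\mathbf{V}^{(t)},\bm{\Lambda}^{(t)}).
\end{equation}

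Establishing this identity is a direct application of the saddle-point characterization of convex optimality. The primal problem \eqref{Primal_problem} is convex (SDP cone C2a, second-order-cone constraint C1a, affine constraints C1b and C2b) and, for every iteration index $t\in\mathcal{F}$, is feasible and satisfies Slater's condition by definition of the set $\mathcal{F}$. Strong duality then guarantees that at the optimal multipliers $\bm{\Lambda}^{(t)}$, the primal optimizers $(\mathbf{X}^{(t)},\mathbf{W}^{(t)},\mathbf{U}^{(t)},\mathbf{V}^{(t)})$ realize a saddle point of the Lagrangian and therefore attain the unconstrained minimum of $g(\cdot,\bm{\Lambda}^{(t)})$. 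Reattaching the linear $\mathbf{B}$-term recovers $\overline{\mathrm{C7a}}$ exactly as stated.

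The argument for $\overline{\mathrm{C7b}}$ is structurally identical, applied to the auxiliary Lagrangian $\widetilde{\mathcal{L}}$ in \eqref{Largrangian_feasible}: again only the cross term $2\mathrm{Re}\{\mathrm{Tr}(\mathbf{B}\widetilde{\bm{\Xi}}_{31}^{(t)})\}$ depends on $\mathbf{B}$, and the feasibility-check problem \eqref{Feasible_problem} is convex and trivially Slater-regular, because the slack vector $\bm{\lambda}$ can be enlarged until $\overline{\mathrm{C1a}}$ is strictly satisfied. Note that, in contrast to $\overline{\mathrm{C7a}}$, no $\sum_k\|\mathbf{w}_k\|_2^2$ term appears on the right-hand side, which is consistent with the fact that the objective of \eqref{Feasible_problem} is $\sum_k\lambda_k$ rather than the transmit power.

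The main obstacle I anticipate is the verification that the unconstrained infimum $\min_{\mathbf{X},\mathbf{W},\mathbf{U},\mathbf{V}} g$ is finite and attained at $(\mathbf{X}^{(t)},\ldots,\mathbf{V}^{(t)})$ rather than escaping to $-\infty$. This reduces to checking dual feasibility of $\bm{\Lambda}^{(t)}$, specifically nonnegativity of $\mu_k$ and $\xi$ and positive semidefiniteness of $\bm{\Xi}$, together with the stationarity conditions produced by the convex solver. Coercivity of the quadratic beamforming term $\sum_k\|\mathbf{w}_k\|_2^2$ and the PSD slacks on $\mathbf{U}$ and $\mathbf{V}$ then rule out unbounded descent directions, so finiteness and attainment follow by standard convex-analytic arguments and the lemma is proven.
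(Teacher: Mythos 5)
Your proposal is correct and follows essentially the same route as the paper's proof: split the Lagrangian into the continuous-variable part and the affine $\mathbf{B}$-term, pull the latter outside the inner minimization, and invoke the saddle-point/strong-duality property of the convex primal (and feasibility-check) problem to conclude that the unconstrained minimum of the remaining part is attained at the primal optimizers. Your added remarks on Slater regularity, dual feasibility, and attainment merely make explicit what the paper compresses into ``the optimality condition of the Lagrangian function for a convex optimization problem.''
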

\begin{proof}
    We first study the inner minimization problem in C7a for feasible iteration index $t$: 
    \begin{eqnarray}
            &&\hspace*{4mm}\min_{{\mathbf{X},\mathbf{W},\mathbf{U},\mathbf{V}}}\mathcal{L}(\mathbf{X},\mathbf{W},\mathbf{U},\mathbf{V},\mathbf{B},\bm{\Lambda}^{(t)})\notag\\
            &&=\min_{{\mathbf{X},\mathbf{W},\mathbf{U},\mathbf{V}}}\left\{\sum_{k\in\mathcal{K}}\left\|\mathbf{w}_k\right\|_2^2+f(\mathbf{X},\mathbf{W},\mathbf{U},\mathbf{V},\bm{\Lambda}^{(t)})\right\}+2\mathrm{Re}\left\{\mathrm{Tr}\big(\mathbf{B}\bm{\Xi}_{31}\big)\right\}\notag\\
            &&\overset{(a)}{=}\sum_{k\in\mathcal{K}}\left\|\mathbf{w}_k^{(t)}\right\|_2^2+f(\mathbf{X}^{(t)},\mathbf{W}^{(t)},\mathbf{U}^{(t)},\mathbf{V}^{(t)},\bm{\Lambda}^{(t)})+2\mathrm{Re}\left\{\mathrm{Tr}\big(\mathbf{B}\bm{\Xi}_{31}^{(t)}\big)\right\},
    \end{eqnarray}
    where equality (a) holds due to the optimality condition of the Lagrangian function for a convex optimization problem. Similarly, we can also prove that 
    \begin{eqnarray}
    &&\hspace*{4mm}\min_{\substack{\widetilde{\mathbf{X}},\widetilde{\mathbf{W}},\\\widetilde{\mathbf{U}},\widetilde{\mathbf{V}}}}\widetilde{\mathcal{L}}(\widetilde{\mathbf{X}},\widetilde{\mathbf{W}},\widetilde{\mathbf{U}},\widetilde{\mathbf{V}},\mathbf{B},\widetilde{\bm{\Lambda}}^{(t)})\notag\\&&=f(\mathbf{X}^{(t)},\mathbf{W}^{(t)},\mathbf{U}^{(t)},\mathbf{V}^{(t)},\bm{\Lambda}^{(t)})+2\mathrm{Re}\left\{\mathrm{Tr}\big(\mathbf{B}\Tilde{\bm{\Xi}}_{31}^{(t)}\big)\right\},
    \end{eqnarray}
    if the primal problem \eqref{Primal_problem} is infeasible in the $t$-th iteration.
\end{proof}
Note that by replacing inequality constraints C7a and C7b by constraints $\overline{\mbox{C7a}}$ and $\overline{\mbox{C7b}}$, respectively, optimization problem \eqref{Master_problem} is recast as a mixed integer linear programming (MILP) problem, which can be solved by employing standard numerical solvers for MILPs, e.g.,  MOSEK \cite{grant2008cvx}. The optimal solution of \eqref{Master_problem} in the $i$-th iteration is denoted as $\mathbf{B}^{(i)}$ and $\mathbf{y}^{(i)}$.
\subsubsection{Overall Algorithm}
The overall procedure of the proposed optimal algorithm is summarized in \textbf{Algorithm 1}. Before the first iteration, we set index $i$ to zero and initialize $\mathbf{B}^{(0)}$ and $\mathbf{y}^{(0)}$ to a feasible solution. In the $i$-th iteration, we begin by solving problem \eqref{Primal_problem}. If the problem is feasible, we generate the optimality cut for the master problem in \eqref{Master_problem} based on the intermediate solution $\mathbf{X}^{(i)}$, $\mathbf{W}^{(i)}$, $\mathbf{U}^{(i)}$, $\mathbf{V}^{(i)}$, and their corresponding Lagrangian multiplier set $\bm{\Lambda}^{(i)}$. Furthermore, we update the upper bound $\mathrm{UB}^{(i)}$ of \eqref{Reform_Problem} with the objective value $\sum_{k\in\mathcal{K}}\left\|\mathbf{w}_k^{(i)}\right\|_2^2$ obtained in the current iteration. If problem \eqref{Primal_problem} is infeasible, we turn to solve the feasibility-check problem in \eqref{Feasible_problem} to generate the feasibility cut for the master problem. Subsequently, we optimally solve the master problem in \eqref{Master_problem} using a standard MILP solver. The objective value of the master problem provides a performance lower bound $\mathrm{LB}^{(i)}$ for the original optimization problem in \eqref{Ori_Problem}. By following this procedure, we iteratively reduce the gap between the $\mathrm{LB}$ and $\mathrm{UB}$ in each iteration. Based on \cite[Theorem 2.4]{geoffrion1972generalized}, the proposed GBD-based algorithm is guaranteed to converge to the globally optimal solution of \eqref{Reform_Problem} in a finite number of iterations for a given convergence tolerance $\Delta\geq 0$. Although the worst case computational complexity of the proposed GBD-based algorithm scales exponentially with the number of MA elements, in our simulation experiments, the proposed GBD method converged in significantly fewer iterations than an exhaustive search.
\begin{algorithm}[t]
\caption{Optimal Resource Allocation Algorithm}
\begin{algorithmic}[1]
\small
\STATE Set iteration index $i=0$, initialize upper bound $\mathrm{UB}^{(0)}\gg 1$, lower bound $\mathrm{LB}^{(0)}=0$, the set of feasible iterations indices $\mathcal{F}=\emptyset$, the set of infeasible iterations indices $\mathcal{I}=\emptyset$, and convergence tolerance $\Delta\ll 1$, generate a feasible $\mathbf{B}^{(0)}$.
\REPEAT
\STATE Set $i=i+1$
\STATE Solve \eqref{Primal_problem} for given $\mathbf{B}^{(i-1)}$, and $\mathbf{y}^{(i-1)}$.
\IF{the primal problem \eqref{Primal_problem} is feasible}
\STATE Update $\mathbf{X}^{(i)},\mathbf{W}^{(i)},\mathbf{U}^{(i)}$, and $\mathbf{V}^{(i)}$ and store the corresponding objective function value of $\sum_{k\in\mathcal{K}}\left\|\mathbf{w}_k^{(i)}\right\|_2^2$
\STATE Construct $\mathcal{L}(\mathbf{X},\mathbf{W},\mathbf{U},\mathbf{V},\mathbf{B},\bm{\Lambda}^{(i)})$ based on \eqref{Largrangian_conf}
\STATE Update the upper bound of \eqref{Reform_Problem} as $\mathrm{UB}^{(i)}=\min\left\{\mathrm{UB}^{(i-1)},\hspace*{1mm}\sum_{k\in\mathcal{K}}\left\|\mathbf{w}_k^{(i)}\right\|_2^2\right\}$, and update $\mathcal{F}$ by $\mathcal{F}\cup\{i\}$ 
\ELSE
\STATE Solve \eqref{Feasible_problem}, update $\widetilde{\mathbf{X}}^{(i)}$, $\widetilde{\mathbf{W}}^{(i)}$, $\widetilde{\mathbf{U}}^{(i)}$, $\widetilde{\mathbf{V}}^{(i)}$
\STATE Construct $\widetilde{\mathcal{L}}(\mathbf{X},\mathbf{W},\mathbf{U},\mathbf{V},\mathbf{B},\widetilde{\bm{\Lambda}}^{(i)})$ based on \eqref{Largrangian_feasible}
\STATE Update $\mathcal{I}$ by $\mathcal{I}\cup\{i\}$ 
\ENDIF
\STATE Solve the relaxed master problem \eqref{Master_problem} and update $\eta^{(i)}$, $\mathbf{B}^{(i)}$, and $\mathbf{y}^{(i)}$.
\STATE Update the lower bound as $\mathrm{LB}^{(i)}=\eta^{(i)}$
\UNTIL $\mathrm{UB}^{(i)}-\mathrm{LB}^{(i)}\leq \Delta$
\end{algorithmic}
\end{algorithm}

\section{Numerical Results}
In this section, we evaluate the performance of the proposed optimal algorithm via numerical simulations. We consider a system where the BS is equipped with $M=4$ MA elements to provide communication service for $K=4$ single-antenna users. The carrier frequency is set to $5$ GHz, i.e., the wavelength is $\lambda=0.06$ m. The transmitter area is a square area of size $l\lambda\times l\lambda$, where $l$ is the normalized transmitter area size at the BS. Due to the properties of the MA driver, the transmitter area is quantized into discrete positions with equal distance $d$ as shown in Fig. 1. The minimum distance $D_{\mathrm{min}}$ is set to $0.015$ m. The users are randomly distributed, and their distance to the BS is uniformly distributed between $20$ m to $100$ m. The noise variance of each user is set to $-80$ dBm, $\forall k \in\mathcal{K}$. As in \cite{ma2022mimo,zhu2022modeling,zhu2023movable}, the channel coefficient $h_{m,k}(\mathbf{p}_n)$ between the $m$-th MA element and the $k$-th user at $\mathbf{p}_n$ is modeled as follows\footnote{The adopted field-response channel model in \cite{ma2022mimo,zhu2022modeling,zhu2023movable} leverages the amplitude, phase, and angle of arrival/departure information on each multipath component under far-field condition to characterize the general multipath channel for MA-enabled systems.},
\begin{equation}
    h_{m,k}(\mathbf{p}_n)=\mathbf{1}_{L_p}^T\bm{\Sigma}_k\mathbf{g}_k(\mathbf{p}_n),
\end{equation}
where $\mathbf{1}_{L_p}$ is the all-one field response vector (FRV) at the $k$-th user equipped with a single fixed-antenna. Diagonal matrix $\bm{\Sigma}_k=\mathrm{diag}\{[\sigma_{1,k},\cdots,\sigma_{L_p,k}]^T\}$ contains the path responses of all $L_p=16$ channel paths from the transmitter area to the $k$-th user. All path response coefficients $\sigma_{l_p,k}, \forall l_p\in\{1,\cdots,L_p\}$ are independently and identically distributed and follow complex Gaussian distribution $\mathcal{CN}(0,L_0D_k^{-\alpha})$, where $L_0$, $D_k$, and $\alpha=2.2$ denote the large-scale fading at reference distance $d_0=1$ m, the distance from BS to the $k$-th user, and the path loss exponent, respectively. $\mathbf{g}_k(\mathbf{p}_n)$ denotes the transmit FRV between the $k$-th user and the MA at position $\mathbf{p}_n$, which is given by \cite{ma2022mimo,zhu2022modeling}
\begin{equation}
    \mathbf{g}_k(\mathbf{p}_n)=\left[e^{j\rho_{k,1}(\mathbf{p}_n)},\cdots,e^{j\rho_{k,L_p}(\mathbf{p}_n)}\right]^T,
\end{equation}
where $\rho_{k,l_p}(\mathbf{p}_n)=\frac{2\pi}{\lambda}\left((x_k-x_1)\cos\theta_{k,l_p}\sin\phi_{k,l_p}+(y_k-y_1)\sin\theta_{k,l_p}\right)$ represents the phase difference of the $l_p$-th channel path between $\mathbf{p}_n$ and the first position $\mathbf{p}_1$. $\theta_{k,l_p}$ and $\phi_{k,l_p}$ denote the
elevation and azimuth angles of departure of the $l_p$-th channel paths for the $k$-th user, respectively, and follow the probability density function $f_{\mathrm{AoD}}(\theta_{k,l_p}, \phi_{k,l_p})=\frac{\cos\theta_{k,l_p}}{2\pi}$, $\theta_{k,l_p}\in [- \pi/2,\pi/2]$, $\phi_{k,l_p}\in [- \pi/2,\pi/2]$. 
In this work, we set the step size $d$ as $0.01$ m for both the proposed scheme and the baseline schemes, as per the experimental setup in \cite{zhuravlev2015experimental}. In addition, we consider also a step size $d$ of $\lambda/2=0.03$ m to investigate the effect of step size on system performance. 

We consider three baseline schemes for comparison. For baseline scheme 1, the MA elements are fixed at $M$ positions that satisfy the minimum distance constraint and are chosen randomly from $\mathcal{P}$. The beamforming vectors $\mathbf{w}_k$ are obtained by solving the beamforming problem for fixed antenna arrays employing semidefinite relaxation \cite{xu2020resource}. For baseline scheme 2, we adopt the AS technique, where the BS is equipped with a $2\times M$ uniform planar array (UPA) with fixed-position antennas spaced by $\lambda/2=0.03$ m such that the channels corresponding to different antennas are statistically independent. We solve the beamforming problem for all possible subsets of $M$ antenna elements and selected the optimal subset that minimizes the BS transmit power. Note that the antenna spacing cannot be adjusted in baseline scheme 2. For baseline scheme 3, we design a suboptimal iterative algorithm based on AO. Specifically, in each iteration, beamforming matrix $\mathbf{W}$ is optimized for a fixed binary decision matrix $\mathbf{B}$ obtained in the last iteration. Afterwards, $\mathbf{B}$ is updated in a block-coordinated-descent (BCD) manner for the fixed $\mathbf{W}$ obtained in the current iteration. 

\begin{figure}
    \centering
    \includegraphics[width=2.6in]{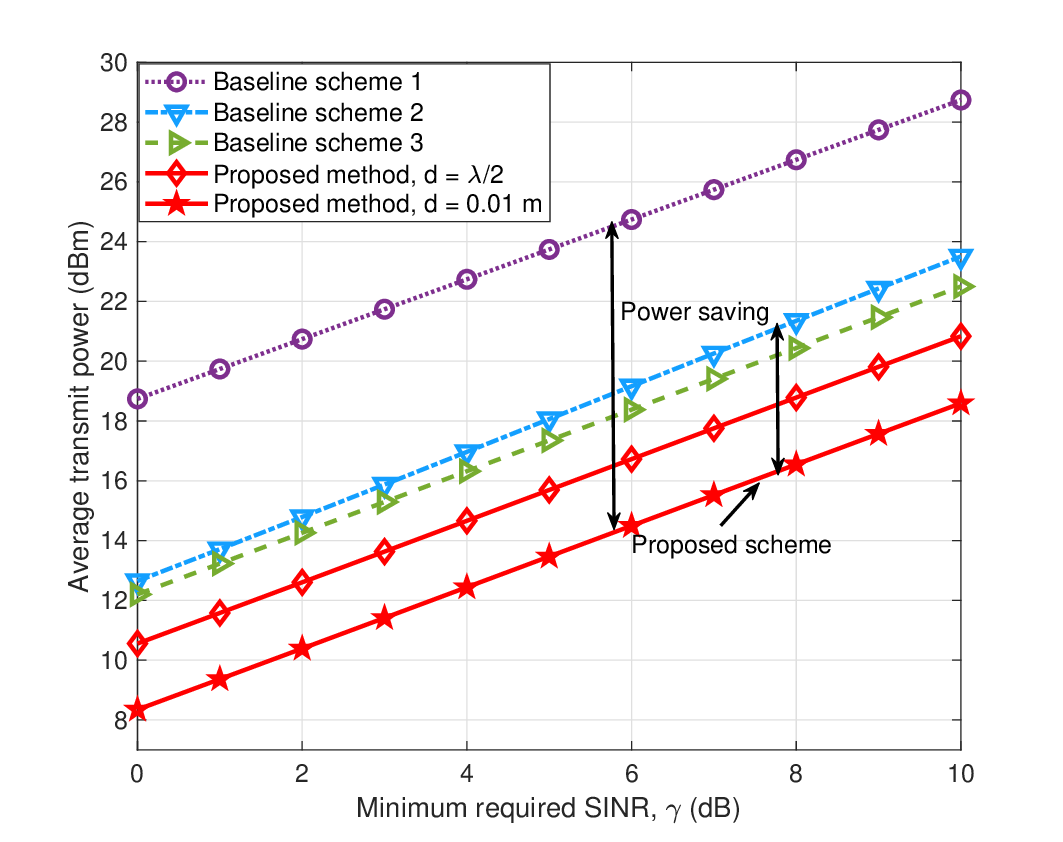}
    \caption{Average BS transmit power versus the minimum required SINR of the users.}
    \label{fig:SINR}
\end{figure}

Fig. \ref{fig:SINR} shows the average BS transmit power required for the considered schemes versus the users' minimum required SINR values $\gamma_k=\gamma$, $\forall k$, where the transmitter area of the MA at the BS is $2\lambda\times 2\lambda=0.12$ m $\times$ $0.12$ m. For the considered SINR range and $d=0.01$ m, the proposed scheme requires on average approximately $120$ iterations to obtain the global optimal solution, which is much faster than an exhaustive search over all possible positions of the $M=4$ MAs. It is observed that as the minimum required SINR value increases, the BS consumes more transmit power to satisfy the more rigorous quality-of-service requirements of the users. Furthermore, we also observe that the proposed scheme outperforms the three baseline schemes for the entire range of $\gamma$. In particular, for baseline scheme 1, the BS is equipped with an antenna array with fixed antenna positions, i.e., the spatial correlation of the transmit antenna array is not optimal. Although baseline scheme 2 employs AS to increase the spatial DoFs at the BS, which leads to a performance improvement compared to baseline scheme 1, the spatial resolution ($\lambda/2$) is quite coarse due to the fixed-position antenna setting. As for baseline scheme 3, the adopted AO algorithm optimizes the positions of the MA elements and the beamforming matrix, leading to a $5$ dB gain compared to baseline scheme 1. However, as the AO-based algorithm is generally suboptimal due to its local search, it may get trapped in stationary points, resulting in a $4$ dB performance gap compared to the proposed optimal solution,
 which highlights the significance of fully exploiting the DoFs provided by the proposed MA-enabled system. Moreover, increasing the step size of the electromechanical device to $d=\lambda/2$ leads to a performance loss of roughly $2$ dB, indicating the trade-off between the transmit power and the MA control precision in MA-enabled systems.

\begin{figure}
    \centering
    \includegraphics[width=2.6in]{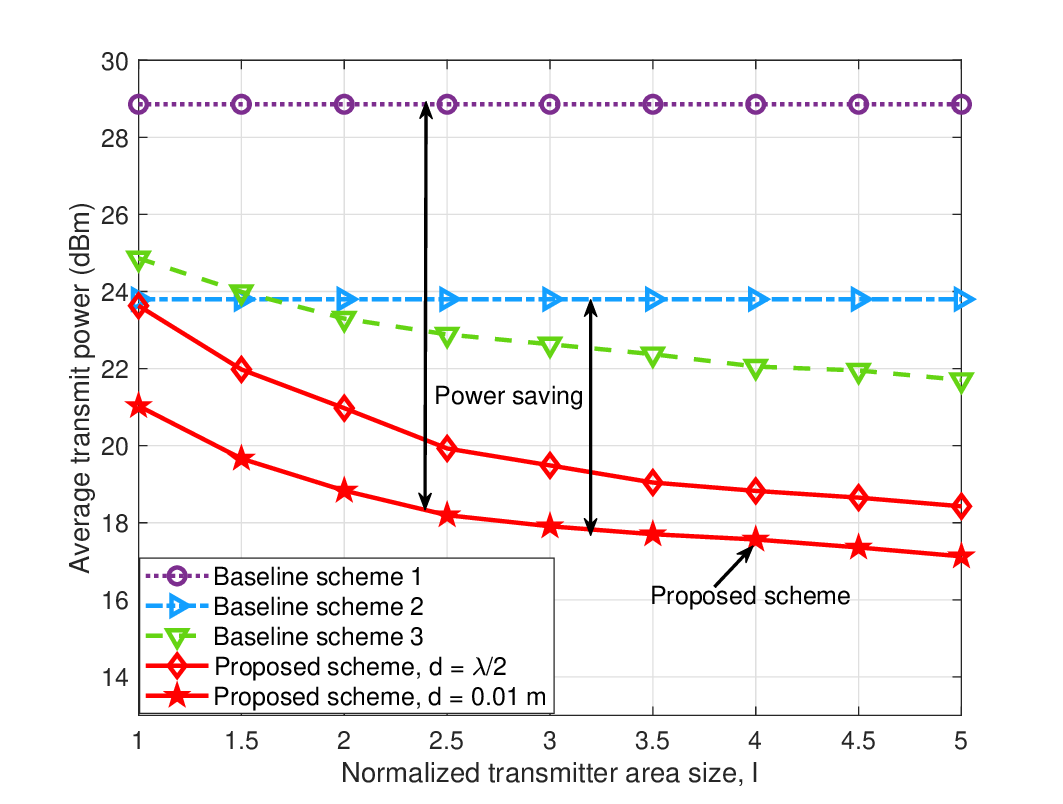}
    \caption{Average BS transmit power versus the normalized transmitter area size of the MA-enabled system.}
    \label{fig::Region}
\end{figure}

Fig. \ref{fig::Region} depicts the relationship between the BS transmit power and the normalized transmitter area size for $\gamma=10$ dB. We observe that for the proposed scheme and the AO-based scheme, the BS transmit power decreases as the normalized transmitter area size increases. This can be attributed to the fact that a larger transmitter area allows higher flexibility for the positioning of the MA elements for shaping the desired spatial correlation, leading to a potential performance gain. On the other hand, baseline schemes 1 and 2 with their fixed antenna positions cannot benefit from a larger transmitter area, indicating that baseline schemes 1 and 2 cannot fully utilize the spatial DoFs provided by a larger transmitter area. It is worth noting that the proposed optimal scheme outperforms the three baseline schemes in terms of the BS transmit power. Specifically, for larger transmitter areas, the gap between the AO-based scheme and the optimal design is enlarged since the AO approach is more likely to converge to a local optimum. Furthermore, the performance of the proposed scheme saturates when the normalized transmitter area size reaches
$3$, indicating that the performance of MA-enabled systems cannot be further improved by using even larger transmitter areas. 

\section{Conclusion}
In this work, we investigated for the first time the optimal resource allocation design for a multiuser MA-enabled downlink MISO communication system. Due to the practical hardware limits, we modeled the movement of the MA elements as a discrete motion. Then, we formulated an optimization problem for the minimization of the BS transmit power while guaranteeing a minimum SINR of the users. We showed that the globally optimal solution of the formulated optimization problem could be obtained with an iterative GBD-based algorithm. Our simulation results revealed the superiority of the proposed multiuser MA-enabled MISO system and the optimality of the proposed GBD-based algorithm. In future work, we will develop a computationally efficient suboptimal scheme for a practical MA-enabled system with imperfect CSI.
\bibliographystyle{IEEEtran}
\bibliography{reference.bib}
\end{document}